\newcommand{\ceiling}[1]{\left\lceil{#1}\right\rceil}
\newcommand{\setof}[1]{\left\{{#1}\right\}}
\newcommand{\set}[2]{\left\{#1\mid #2\right\}}
\newcommand{\spsp}{SP$^2$}
\newcommand{\ourprotocols}{Simultaneous Progressing Switching Protocols}
\newtheorem{lemma}{Lemma}
\newtheorem{theorem}{Theorem}
\newtheorem{corollary}{Corollary}
\newtheorem{definition}{Definition}
\newtheorem{example}{Example}
\begin{document}
\thispagestyle{plain}

%
\title{Simultaneous Progressing Switching Protocols
  for Timing Predictable Real-Time Network-on-Chips}

%

\author{Niklas Ueter$^1$, Georg von der Br\"uggen$^1$, Jian-Jia Chen$^1$, Tulika
  Mitra$^2$, and Vanchinathan Venkataramani$^2$\\
  $^1$TU Dortmund University, Germany\\
  $^2$National University of Singapore, Singapore
}
\maketitle

\begin{abstract}
  Many-core systems require inter-core communication, and
  network-on-chips (NoCs) have been demonstrated to provide good
  scalability. However, not only the distributed structure but also
  the link switching on the NoCs have imposed a great challenge in the
  design and analysis for real-time systems. With scalability and
  flexibility in mind, the existing link switching protocols usually
  consider each single link to be scheduled independently, e.g., the
  worm-hole switching protocol. The flexibility of such link-based
  arbitrations allows each packet to be distributed over multiple
  routers but also increases the number of possible link states (the
  number of flits in a buffer) that have to be considered in the
  worst-case timing analysis.

  For achieving timing predictability, we propose less flexible
  switching protocols, called \emph{\ourprotocols{}} (\spsp{}), in
  which the links used by a flow \emph{either} all simultaneously
  transmit one flit (if it exists) of this flow \emph{or} none of them
  transmits any flit of this flow. Such an \emph{all-or-nothing}
  property of the SP$^2$ relates the scheduling behavior on the
  network to the uniprocessor self-suspension scheduling problem. We
  provide rigorous proofs which confirm the equivalence of these two
  problems. Moreover, our approaches are not limited to any specific
  underlying routing protocols, which are usually constructed for
  deadlock avoidance instead of timing predictability.  We demonstrate
  the analytical dominance of the fixed-priority \spsp{} over some of the
  existing sufficient schedulability analysis for fixed-priority
  wormhole switched network-on-chips.
\end{abstract}


\textbf{Keywords:} Network-on-Chip, Real-Time Scheduling, Simultaneous Progressing Switching
   Protocols

\section{Introduction}
\label{sec:introduction}

Power dissipation has constrained the performance scaling of
single-core systems in the past decade. Instead of increasing the
operational frequency of a processor, the chip manufacturers have
shifted their design focus towards chips with multiple or many cores
that operate at lower voltages and frequencies than their single-core
counterparts. In a multi- or many-core system, communication and
synchronization of the applications executed on different cores have
to be designed efficiently  from both hardware and software
perspectives.

The communication fabric of a multi- or many-core platform 
must scale with the number of cores. Otherwise, the computation capacity of the
cores may be wasted if they are waiting for the communication,
synchronization, or memory access. One possible approach to achieve
good scalability of the communication is the Network-on-Chip (NoC)
architecture, in which a switched network with routers is used to
provide the interconnection of the physical cores on a chip. The NoC
architecture allows parallel inter-core communication with moderate
hardware costs.  
NoCs are the prevalent choice of
interconnection due to their overall good performance and scalability
potential as reported by \mbox{Kavaldjiev et. al~\cite{Kavaldjiev}.}

The efficiency of a NoC highly depends on many design factors,
including topology, routing protocols, flow control, switching
arbitration protocols, etc. The currently available multi-core
platforms based on NoCs have employed different topologies, e.g., a
ring in the \emph{Intel Xeon Phi 3120A}, a \mbox{2-D} torus in \emph{MPPA Manycores} 
by Kalray, and a \mbox{2-D} mesh in \emph{Tilera TILE-Gx8036}. 
 Moreover, many different
communication protocols and communication topologies have been
proposed and evaluated in the literature.  Specifically,
\emph{flit-based} network-on-chips (NoCs) have been proposed with the
goal to decrease production cost and increase energy efficiency due to
less complex routers and decreased buffer sizes as compared to other
approaches.

Real-Time system design is concerned with the construction of systems
that can be formally verified to satisfy timeliness constraints. Such
real-time constraints are prevalent in, e.g., timing-sensitive
applications in embedded mobile platforms, automotive, and aerospace
applications.  To construct a hard real-time system on a NoC, each
hard real-time message (defined as an instance of a sporadic/periodic
flow) in the NoC has to be successfully transmitted from its source to
its destination before its deadline.

The approaches for real-time systems on a NoC apply two general strategies.  
One is to utilize time-division-multiplexing (TDM) 
to ensure that the timing constraints are satisfied by
constructing the transmission schedule statically with a repetitive
table, e.g., in \cite{Goosens2005, Paukovits2008ConceptsOS, Stefan2012, DBLP:journals/tvlsi/KasapakiSSMGS16, DBLP:conf/fpl/Schoeberl07, DBLP:conf/date/MillbergNTJ04, DBLP:journals/jsa/SchoeberlAAACGG15, DBLP:conf/rtns/HardeFBC18}.
 Another is to apply a
priority-based dynamic scheduling strategy in the routers (and
switches) to arbitrate the flits in the network, e.g., in~\cite{365629,627905,708526,1466499,Shi:RCA:2008,Kashif:2014,Kashif:2016,XIONG:2016,DBLP:journals/corr/IndrusiakBN16,Xiong:2017,Indrusiak:DATE:2018}. The
difficulty of the TDM strategy is to construct the TDM schedule and
the global clock synchronization, whilst the difficulty of the
priority-based scheduling strategy is to validate the schedulability, 
i.e., whether all messages can meet their deadlines in the worst case.

Specifically, for dynamic scheduling strategies, the wormhole-switched
fixed-priority NoC with preemptive virtual channels has recently been studied 
in a series of papers. 
The first attempts to tackle the schedulability
analysis were in 1994 in~\cite{365629} and 1997 in~\cite{627905}. Both
of them were found to be flawed in 1998 by Kim et al.~\cite{708526},
whose analysis was later found to be erroneous in 2005 by Lu et
al.~\cite{1466499}. The series of erroneous analyses
continued in~\cite{365629,627905,708526,1466499} until a seemingly
correct result by Shi and Burns~\cite{Shi:RCA:2008} published in
2008. Eight years later, Xiong et al.~\cite{XIONG:2016} pointed out
the analytical flaw in~\cite{Shi:RCA:2008} and disproved the safe
bounds in~\cite{Kashif:2014,Kashif:2016}. The erroneous patch in
\cite{XIONG:2016} was later fixed by the authors in their journal
revision in~\cite{Xiong:2017} in 2017. In the mean time, Indrusiak et
al.~\cite{DBLP:journals/corr/IndrusiakBN16,Indrusiak:DATE:2018}
presented new analyses, 
but they ``\emph{chose to provide intuitions, insight and
  experimental evidence on the proposed analysis and its improvements,
  rather than theorems or proofs}.''
They supported their analyses
by evaluating concrete cases, i.e., whether there was any observed
case which was claimed to meet the deadlines but in fact missed the
deadlines. However, such case studies cannot validate the correctness of their
analyses, as also stated by Indrusiak et
al. \cite{DBLP:journals/corr/IndrusiakBN16,Indrusiak:DATE:2018}.
Specifically, among the 10 results
\cite{365629,627905,708526,1466499,Shi:RCA:2008,Kashif:2014,Kashif:2016,XIONG:2016,DBLP:journals/corr/IndrusiakBN16,Xiong:2017}
published up to 2017, Table~VII~in~\cite{DBLP:journals/corr/IndrusiakBN16}
shows that eight of them were already disproved by counter examples, and two of them are \emph{probably}
safe as no counter examples have been given.  Nikov\'ic et
al.~\cite{Nikolic2019} published the most recent result for the
worst-case response timing analysis.

The fact that almost all proposed analyses for the problem 
discussed in
the previous paragraph have been found flawed suggests that the
scheduling algorithm and architecture can be potentially too complex
to be correctly analyzed. These approaches have adopted the well-known
worst-case response time analysis for uniprocessor sporadic real-time
tasks under fixed-priority uniprocessor scheduling developed
in~\cite{lehoczky89,joseph86responsetimes}, but they have never shown
the connection between worm-hole switching and uniprocessor
scheduling.

Another research line to analyze the worst-case response time of
wormhole-switched fixed-priority NoC with preemptive virtual channels
is to apply Network Calculus and Compositional Performance Analysis
(CPA), or their extensions, to analyze the transmission on the links
in a compositional manner, e.g.,
\cite{NC-Wormhole-WRR-2009,MPPA-ERTS-WCTT-2018,NoC-WCTT-RC-NC-WFCS-2016,Wormhole-BP-2015,DBLP:conf/date/TobuschatE17,DBLP:conf/date/RamboE15}. The
worst-case response time of a flow is the sum of the worst-case
response times on individual links, which can be
pessimistic. One exception is the analysis from
Giroudot~and~Mifdaoui~\cite{DBLP:conf/rtas/GiroudotM18}, which applies
the Pay Multiplexing Only Once principle by Schmitt et
al.~\cite{5755042}.

\textbf{Contributions:}
In this paper, we revisit the fundamental algorithmic
problem of flit-based NoC arbitration protocols with respect to
real-time constraints by hinting to the fundamental algorithmic
complexity. In addition, we identify the analytic pessimism of
wormhole-switched fixed-priority arbitration protocols due to the
large degree of uncertainty in system behaviour and thus hard to
verify timeliness properties.

This paper intends to answer a few
unsolved fundamental questions for real-time networking switching in a
NoC and has the following contributions:
\begin{itemize}
\item \emph{What is the fundamental difficulty of worst-case timing
    analysis when flit-based transmissions are handled by switch-based
    (link-based) scheduling?} We will show that the difficulty is
  mainly due to the explosion of the possible \emph{progression
    states} of a message in the transmission path. The existing
  analyses did not intend to prove the coverage of all possible
  progression states. Moreover, we will also argue why priority-based
  scheduling without controlling the number of possible progression
  states is therefore difficult to be analyzed and optimized.
\item \emph{Is there any equivalence between existing uniprocessor
    scheduling and NoC switching? } Yes, but, to the best of our
  knowledge, such a protocol has never been designed.  For achieving
  timing predictability, we propose less flexible switching protocols,
  called \emph{\ourprotocols{}} (\spsp{}), in which the links used by
  a message \emph{either} all simultaneously transmit one flit of this
  message \emph{or} none of them transmits any flit of this
  message. Such an \emph{all-or-nothing} property of the \spsp{} relates
  the scheduling behavior on the network to the uniprocessor
  self-suspension scheduling problem.  We provide rigorous proofs
  which confirm the equivalence of these two problems. 

\item We demonstrate the analytical dominance of the (work-conserving)
  fixed-priority version of our approach over the existing sufficient schedulability
  analyses for fixed-priority wormhole switched NoCs
  in~\cite{Xiong:2017,Indrusiak:DATE:2018}.  Moreover, our approaches
  are not limited to any specific underlying routing protocols, which
  are usually constructed for deadlock avoidance instead of timing
  predictability.


\end{itemize}


\section{System Model and Problem Definition}
\label{sec:model}

Network-on-Chips (NoCs) are characterized by the 
topology, routing protocol, arbitration, buffering, flow control mechanism, and
switching protocol.\footnote{Our notation of \emph{flows} is
  equivalent to \emph{tasks} and our notation of \emph{messages} is
  equivalent to \emph{jobs} in the classical notation of real-time
  systems community.} In this paper, we define a NoC as a collection
of cores $\textbf{A}$, routers ${\bf V}$, and links ${\bf \Lambda}$.
Each router is connected to at least one other router by two
physically separate links, i.e., up-link and
down-link. Figure~\ref{fig:noc-mesh-example} illustrates a meshed
network with $9$ cores, $\textbf{A}=\set{A_i}{i=1,2,\ldots,9}$, $9$
routers $\textbf{V}=\set{V_i}{i=1,2,\ldots,9}$, and $24$ links in
${\bf \Lambda}$ between $V_i$ and $V_j$ for some $i$ and $j$ and $18$
links between $A_i$ and $V_i$ for $i=1,2,\ldots,9$.
We assume that all the cores, routers, and links are homogeneous.
Therefore, the transmission rate and processing
capability are identical. 

\begin{figure}[t]
\centering
\includegraphics[width=0.35\textwidth]{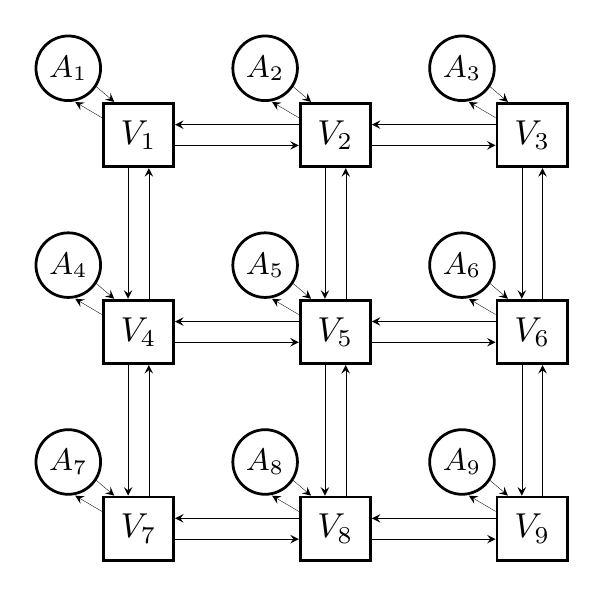}
\caption{Examplary 3x3 mesh NoC.
Each application is connected to a source-router where it injects messages.}
\label{fig:noc-mesh-example}
\end{figure}

\subsection{Switching Mechanisms}
\label{sec:switching}

With respect to switching, three different switching protocols have
been established, namely, circuit switching,
store-and-forward switching, and wormhole switching. 

\subsubsection{Circuit Switching} A packet (transmission unit) is
forwarded by the routers through dedicated routes that are 
reserved/allocated until the transmission is finished.  Therefore, each transmission can only
be preempted during the establishing of a route.  An advantage of this
approach is that
no buffers are required and subsequent arbitrary
deadlock-free routing, which allows for optimized and adaptive routing
schemes.  However, the overhead to establish the routes may
render this approach infeasible when small packets are injected frequently. 


\subsubsection{Store-And-Forward Switching} Routers can only forward
a packet once it is completely received and stored, which implies that
the routers must provide sufficient buffer capacity to store a
complete packet.  Fortunately, the arbitration protocol is suitable
for real-time analysis, since a packet may compete for
at most one link at each point in time. 

\subsubsection{Wormhole-Switching} Each packet is divided into smaller
transmission units, called flits, 
always including a designated
header and a designated tail flit which are used for control and routing.  That
is, each payload flit follows the output port of the header flit.  In
fixed-priority wormhole-switched 
NoCs, each router
contains virtual channels, i.e., separated buffers that contain flits
of a single packet.  Once the tail flit is transmitted and removed
from the buffer, the virtual channel can be used for flits of another
packet.  Furthermore,  the highest-priority flit is
scheduled to transmit over the link at each router.  In this approach, complete
packets do not have to be buffered and 
which allows smaller buffers
in the hardware design.  On the downside, each packet may be
distributed over multiple routers and subsequently compete for
multiple links at the same time which makes the timing analysis
complex.  Additionally, the limited number of virtual channels and
full buffers on the receiving router add additional interference which
complicate the analysis.

\subsection{Messages and Periodic/Sporadic Flows}
\label{sec:sporadic-periodic}

A periodic (sporadic) flow $f_i$ generates an infinite sequences of flow
instances, called messages, and has the following parameters: 

%
%
\begin{itemize}
  \item $T_i$ is the minimum inter-arrival time or period of the flow $f_i$,
  i.e., for a periodic flow one message is released exactly every $T_i$ time units and for a
  sporadic flow two subsequent messages are separated by at lest $T_i$.
\item ${\bf \Lambda}_i$ is the static routing path of the flow $f_i$, i.e.,
  $\lambda_{i1}, \lambda_{i2},\ldots, \lambda_{i\eta_i}$ is the sequence of
  the $\eta_i$ links that a message of $f_i$ has to be transmitted on. 
  We assume that a physical link cannot be used more than once
  in the static routing path ${\bf \Lambda}_i$ for any $i$.
\item $C_i$ is the worst-case number of flits of a message of
  $f_i$.
\item $D_i$ is the relative deadline of the flow $f_i$. That is, when
  a message is injected at time $t$, its absolute deadline is $t+D_i$.
  Our protocols are not restricted to any specific relation of the
  minimum inter-arrival time and the relative deadline $D_i$. However, our
  timing analysis will focus on the constrained-deadline cases, where $D_i \leq
  T_i$ $\forall i$.
\end{itemize}

\subsection{Problem Definition}
\label{sec:problem-definition}

The \emph{scheduler design} problem studied in this paper is defined
as follows: \emph{ We are given a NoC, defined as a collection of
  cores~$\textbf{A}$, routers~${\bf V}$, and links~${\bf \Lambda}$.
  For a given set $\textbf{F}$ of sporadic or periodic flows on the NoC,
  the objective is to design a switching mechanism (scheduling
  algorithm) that can ensure that all messages (instances of the
  flows) can meet their deadline.  }

The \emph{schedulability test} problem studied in this paper is
defined as follows: \emph{ We are given a NoC, defined as a collection
  of cores $\textbf{A}$, routers $\textbf{V}$, and links ${\bf \Lambda}$.
  For a given set $\textbf{F}$ of sporadic or periodic flows on the
  NoC and a switching mechanism, the objective is to validate whether
  the messages (instances of the flows) can meet their deadlines.}


We assume that the cores and routers are
synchronized perfectly with respect to time. That is, there is no
clock drift in the NoC. 
Otherwise, the clock drift must be considered carefully.
One solution is to introduce
additional delays and interferences to pessimistically bound the
impact due to clock drifts.  Moreover, we assume
\emph{discrete time}, i.e., the NoC operates in the granularity of a fixed
time unit and the finest granularity is the flit. 

\section{Existing Analytical Approaches for Worm-Hole Switching}
\label{sec:existing-and-progression-model}

In this section, we will first summarize the existing analytical
approaches for the worm-hole switching mechanism in
Section~\ref{sec:summary-of-existing-worm-hole}. Then, we will explain
the mismatch of the existing analyses and the underlying uniprocessor
scheduling in Section~\ref{sec:progession-model}.

\subsection{Summary of Existing Analyses}
\label{sec:summary-of-existing-worm-hole}

A first analytical approach to determine the worst-case response time
of sporadic traffic flows in wormhole-switched fixed-priority
network-on-chips was given by Mutka~\cite{365629} and
Hary~and~Ozguner~\cite{627905}. Both of them are based on the
schedulability analysis for uniprocessor sporadic real-time tasks
under fixed-priority scheduling developed
in~\cite{lehoczky89,joseph86responsetimes}. To analyze the worst-case
response time of the flow $f_i$, they considered the complete path
${\bf \Lambda}_i$ as a single shared resource, i.e., a uniprocessor. This
shared resource may not always  be  available for $f_i$, and they
modeled the unavailability by \emph{only} considering the higher-priority flows that use any link in
${\bf \Lambda}_i$, called \emph{direct
  interference}. They concluded that the problem is equivalent to the
fixed-priority uniprocessor scheduling, which was disproved by Kim et
al.~\cite{708526}, who 
showed that the flow $f_i$ can suffer
from the interference due to flow $f_j$ even if
${\bf \Lambda}_i$ and ${\bf \Lambda}_j$ have no intersection, called \emph{indirect interference}. 
By extending the notion of interference sets developed by Kim et al.~\cite{708526}, Lu et
al.~\cite{1466499} proposed to discriminate between flows that could
not interfere with each other to reduce the pessimism of the
analysis. 

However, both of the approaches in~\cite{708526, 1466499} assume that
the synchronous release of the first messages of the sporadic
real-time flows is the worst-case, i.e., similar to the critical instant
theorem in classical uniprocessor fixed-priority scheduling
proposed by Liu~and~Layland~\cite{liu73scheduling}. This statement was
later disproved in 2008 by Shi~and~Burns~\cite{Shi:RCA:2008}, where
jitter terms were added to model the asynchronous release of the
first messages of the sporadic real-time flows.
Based on the results of this work,
Kashif and Patel proposed a link-based analysis called stage-level
analysis~\cite{Kashif:2014, Kashif:2016} to achieve a tighter analysis. 
Both analyses were proved to be unsafe by Xiong et
al.~\cite{XIONG:2016} using simulations. It was discovered that a flit
of a higher-priority flow may induce interference more than once, i.e., on
multiple routers, thus rendering the conjectures made by Shi~and~Burns~\cite{Shi:RCA:2008} and
Kashif and Patel~\cite{Kashif:2014, Kashif:2016} false.  This behavior is referred to as multi-point progressive
blocking by Indrusiak et al.~\cite{Indrusiak:DATE:2018}.  The
state of the art with respect to fixed-priority wormhole-switched
networks-on-chips with infinite buffers is represented
by~\cite{DBLP:journals/corr/IndrusiakBN16, Xiong:2017}.
Unfortunately, the infinite buffer assumption is infeasible in real
systems, thus back-pressure effects that occur due to limited buffer
sizes in the routers have to be considered.  In the work of Indrusiak
et al.~\cite{Indrusiak:DATE:2018}, the authors incorporate buffer
sizes into the worst-case response time analysis. They ``\emph{chose to provide intuitions, insight and
  experimental evidence on the proposed analysis and its improvements,
  rather than theorems or proofs}.''
Thus, further counterexamples may be
found.  The fact that almost all proposed analyses have been found to
be flawed, suggests that the scheduling algorithm and architecture are
too complex to be reasonably analyzed. Further evidence for this claim
is that in the analyses provided by Indrusiak et
al.~\cite{Indrusiak:DATE:2018}, increased buffer sizes lead to
increased worst-case response times.
Nikol\'ic~et~al.~\cite{Nikolic2019} presented an improved analysis
over the results in
\cite{Indrusiak:DATE:2018,Xiong:2017}.

Motivated by this, we revisit the fundamental algorithmic problem of
packet-based network-on-chip scheduling and identify the analytic
pessimism incurred by the complexity of link-based arbitration as
harmful to routing and to verification. All the above results in
\cite{365629,627905,708526,1466499,Shi:RCA:2008,Kashif:2014,Kashif:2016,XIONG:2016,DBLP:journals/corr/IndrusiakBN16,Xiong:2017,Indrusiak:DATE:2018,Nikolic2019}
made an assumption that the schedulability analysis is somehow related
to a corresponding uniprocessor fixed-priority scheduling
problem. However, this has never been formally proved. We will explain
the mismatch by using the progression model in
Section~\ref{sec:progession-model}.

\subsection{Progression Model}
\label{sec:progession-model}

\begin{figure*}[t]
\centering
\includegraphics[width=0.6\textwidth]{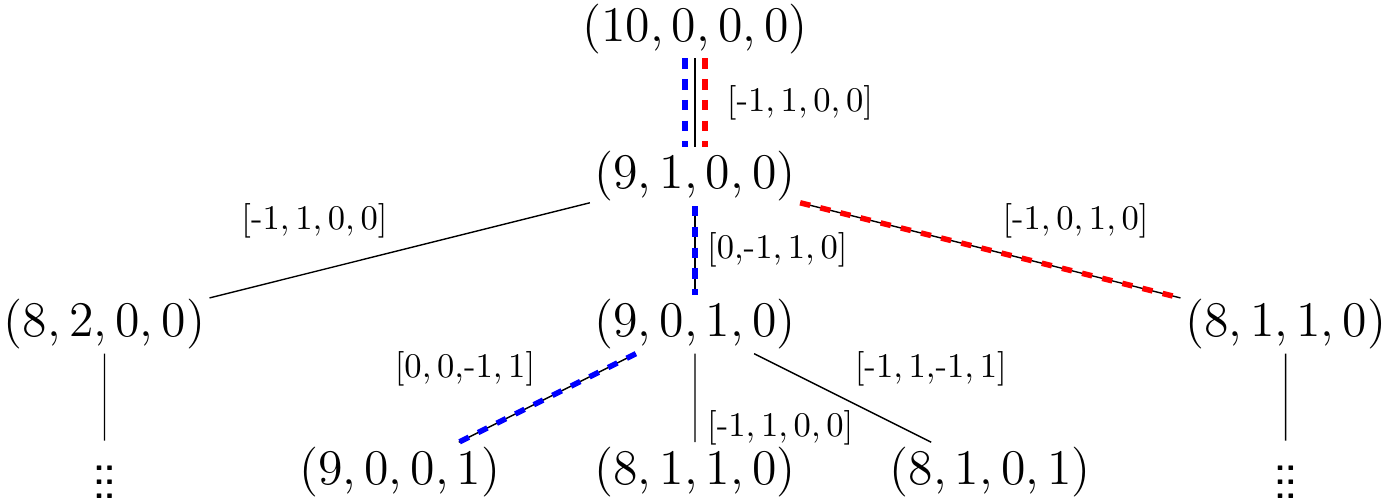}
\caption{Progressions of a message which involves 2 cores and 2
  routers, i.e., 3 links, when $C_i=10$. The numbers associated to an
  edge indicate the change of the buffers in the vector $\vec{B}_i$. 
  The red dashed path illustrates the beginning of a \emph{fastest} progression 
 and the blue dashed path illustrates the beginning of a \emph{slowest} progression.}
\label{fig:fsm-forwarding}
\end{figure*}

In this subsection, we detail why the link-based arbitration problem
does not match the uniprocessor scheduling model and illustrate the
subsequent problems in response-time analyses using uniprocessor
scheduling theory. To explain the mismatch, we will focus on  the
possible buffer states of one instance (i.e., message) of a flow $f_i$
under analysis.  Let $\vec{B}_i$ denote the state vector of the number
of flits that are buffered in the cores and routers involved in the
path ${\bf \Lambda}_i$.  Suppose that there are $\eta_i$ links involved in
${\bf \Lambda}_i$.  Note that the first element in $\vec{B}_i$ denotes the number
of flits of the whole message of $f_i$ in the source core to be sent and the
last element in $\vec{B}_i$ denotes the number of flits that have been
received at the destination core.

Recall our assumption in Section~\ref{sec:problem-definition} that the
NoC is assumed to be fully synchronized in time. 
Therefore, in each time unit, a buffered flit can be forwarded
to the next node (a core or a router).  Since the NoC works in
discrete time,  we can observe the changing of the vector $\vec{B}_i$
over time when considering only the time units at which the
message is sent.

When a flit is sent in a time unit at the $j$-th link in ${\bf \Lambda}_i$,
then the buffer of the $j$-th node is reduced by $1$ and the buffer of
the $j+1$-th node in ${\bf \Lambda}_i$ is increased by $1$.  For notational
brevity, let $\vec{y}_j$ be a vector of $\eta_i+1$ elements in which
all the elements are $0$ except the $j$-th element that is $-1$ and the
$j+1$-th element that is $1$. For example, $\vec{y}_1+\vec{y}_2$ implies
that the first and the second links send one flit forward in this time
unit. Moreover, $\vec{y}_1+\vec{y}_3$ implies that the first and the
third links send one flit forward in this time unit.

\begin{definition}[Progression]
\label{def:progression}
Consider a buffer state $\vec{B}_i$ at time $t$, in which all
elements in $\vec{B}_i$ are non-negative integers.  Suppose element
$z_j$ is either $0$ or $1$ for $j=1,2,\ldots, \eta_i$ and \textbf{at
  least one of them is $1$}.  Specifically, when $z_j$ is $1$, the
$j$-th link sends one flit forward.

Let $\vec{Y}$ be $\sum_{j=1}^{\eta_i} z_j\vec{y}_j$. For a buffer
state $\vec{B}_i$, $z_j$ for $j=1,2,\ldots,\eta_j$, and a vector $\vec{Y}$ defined above, the change of
the buffer state is \textbf{valid} if
\begin{itemize}
\item all elements in $\vec{B}_i+\vec{Y}$ are non-negative
  integers, and
\item the $j$-th element in $\vec{B}_i$ is $\geq z_{j+1}$ for $j=1,2,\ldots,\eta_i-1$.
\end{itemize}
If the change of the buffer state is valid, we say that the flow makes a
\textbf{progression} in this time unit, i.e., one clock cycle. \qed
\end{definition}

In a time unit, a link may or may not be utilized to send one flit of
$f_i$ in the switching mechanism. Therefore, there are $2^{\eta_i}$
combinations of the vectors of $\vec{y}$s.  Note that progressions do
not have to take place in two consecutive time units. If the message
is not sent in the next time unit at all, there is no progression of
the message.
As an illustrative example, consider $C_i=10$ and that the message is sent
from core $A_1$ via $R_1$ and $R_2$ to core $A_2$ in the NoC
illustrated in Figure~\ref{fig:noc-mesh-example}.  If one flit is sent
in a time unit, 
we get $\vec{B}_i = (C_i-1, 1,
0, 0)$. Now, there are three possibilities for the next time
unit when the NoC transmits a flit or multiple flits of the message:
\begin{itemize}
\item $\vec{B}_i = (C_i-1, 0, 1, 0)$: That is, $A_1$ does not send any
  flit but $R_1$ sends a flit to $R_2$. The progression is due to
  $\vec{Y}=(0,-1,1,0)$.
\item $\vec{B}_i = (C_i-2, 2, 0, 0)$: That is, $A_1$ sends one flit to
  $R_1$ but $R_1$ does not send a flit to $R_2$, i.e., 
  $\vec{Y}=(-1,1,0,0)$.
\item $\vec{B}_i = (C_i-2, 1, 1, 0)$: That is, $A_1$ sends one flit to
  $R_1$ and $R_1$ sends a flit to $R_2$, which means that the progression is due
  to
  \mbox{$\vec{Y}=(-1,1,0,0)+(0,-1,1,0)=(-1,0,1,0)$.}
\end{itemize}
The first three levels of the tree illustrated in
Figure~\ref{fig:fsm-forwarding} provides the above example. In each of
the above state, the next progression has to be considered. Due to
space limitation, we only further illustrate the progressions that are
possible when $\vec{B}_i$ is $(9, 0, 1, 0)$.

\begin{definition}[A Series of Progressions]
\label{def:series-progression}
A series of progressions is a sequence of progressions
defined in Definition~\ref{def:progression}, one after another,
starting from $\vec{B}_i = (C_i, 0, 0, \ldots, 0)$ to $\vec{B}_i = (0,
0, \ldots, C_i)$. \qed
\end{definition}

A safe analysis of the worst-case response time or the schedulability
for sending the message should consider all possible series of
progressions of $f_i$ starting from $\vec{B}_i = (C_i, 0, 0, \ldots,
0)$ to $\vec{B}_i = (0, 0, \ldots, C_i)$. If we only account for the
number of time units when the message of $f_i$ is transmitted, it is
not difficult to see that the \emph{slowest} one only sends one flit
forward per progression, in which the switching mechanism results in
$C_i\times \eta_i$ iterations of progressions. Moreover, the
\emph{fastest} one sends one flit (if available) forward for all
cores and routers involved in the path $\lambda_i$ per progression, in
which the switching mechanism results in $C_i + \eta_i-1$ iterations
of progressions.

All the results in
\cite{365629,627905,708526,1466499,Shi:RCA:2008,Kashif:2014,Kashif:2016,XIONG:2016,DBLP:journals/corr/IndrusiakBN16,Xiong:2017,Indrusiak:DATE:2018,Nikolic2019}
made an assumption that the corresponding uniprocessor scheduling
problem can use $C_i + \eta_i-1$ as the worst-case execution time of
the corresponding sporadic task to represent the flow $f_i$. This
assumption implicitly implies that the flow $f_i$ takes the fastest
series of progressions explained above. Such uniprocessor analyses are
only valid when the other iterations of progressions are accounted
correctly.
However, the fastest series of progressions for $f_i$ is not always
possible in the worst case. To ensure the correctness of the analysis,
some additional time units should be included.
Many patches have been provided to account for such additional
time units after the series of flaws found in 
\cite{365629,627905,708526,1466499,Shi:RCA:2008,Kashif:2014,Kashif:2016,XIONG:2016}.

Informally speaking, the researchers
in~\cite{365629,627905,708526,1466499,Shi:RCA:2008,Kashif:2014,Kashif:2016,XIONG:2016,DBLP:journals/corr/IndrusiakBN16,Xiong:2017,Indrusiak:DATE:2018,Nikolic2019}
have tried to construct their analyses by linking the problem to a
corresponding uniprocessor scheduling problem. Most of them were later
found flawed, e.g.,
\cite{365629,627905,708526,1466499,Shi:RCA:2008,Kashif:2014,Kashif:2016,XIONG:2016},
or without a formal proof, e.g.,
\cite{DBLP:journals/corr/IndrusiakBN16,Indrusiak:DATE:2018,Nikolic2019}.\footnote{The
  proofs in \cite{Nikolic2019} did not consider the equivalence of
  the worst-case response time analysis adopted in uniprocessor
  systems and the analysis of a NoC. Instead, they emphasized the
  quantification of different types of interferences. However, in many places
  in the proofs, e.g., the building blocks from Lemmas 3, 4, and 6 in
  \cite{Nikolic2019}, the derivation is based on examples. 
 } 
However, none of them has seriously considered all the possible progressions for transmitting $f_i$. 
Whether the worst-case response time analysis for
preemptive worm-hole switching is equivalent to
any specific form of uniprocessor scheduling problem remains as an
open question.

We strongly believe that the worst-case response time analysis or the
schedulability analysis for wormhole-switched fixed-priority
network-on-chip is highly complex, as the \emph{timing behavior is not
  uniprocessor equivalent}, as reported in the literature. In a
uniprocessor system, if a job is executed for $\delta$ time units, the
execution time of the job is reduced by $\delta$ time units. However,
sending $\delta$ flits can have different series of progressions in
the NoC.

If we would like to consider the complete path ${\bf \Lambda}_i$ as a single
shared resource, like in
\cite{365629,627905,708526,1466499,Shi:RCA:2008,Kashif:2014,Kashif:2016,XIONG:2016,DBLP:journals/corr/IndrusiakBN16,Xiong:2017,Indrusiak:DATE:2018,Nikolic2019},
and analyze the worst-case behavior by utilizing the corresponding
instance of the uniprocessor scheduling problem, the mapping from the
series of the progressions to the uniprocessor problem must be
\emph{formally proved}.

\emph{Please note that we do not claim that such a mapping is impossible.} We only
stated the mismatch.  Such mappings are potentially very difficult to
achieve precisely due to the large search space. However, safe
approximations and upper bounds are also missing in the literature.
In both cases, a correct proof should explain how to safely account
for the number of iterations in the progressions of the flows and map
them to the corresponding execution time in the constructed instance
of the uniprocessor scheduling problem.

\section{Simultaneous Progressing Switching Protocols}
\label{sec:simultaneous-progressing-switching-protocols}

To the best of our knowledge, there is no formal proof to demonstrate
the connection between the progressions of the messages on a NoC and
the corresponding instance of the uniprocessor scheduling problem. We
note that the analytical difficulty is potentially due to the
flexibility introduced in the switching mechanism. In the position
papers by Wilhelm et al.~\cite{DBLP:journals/tcad/WilhelmGRSPF09} and
Axer et. al~\cite{Axer:2014}, the authors state that system properties
that are subject to predictability constraints should already be
considered and guaranteed from the design. Since the worm-hole
switching protocol was not designed with  predictability
 constraints in mind,  
 designing new protocols that can be
safely analyzed without losing too much flexibility or efficiency can
be an alternative.


Instead of proving the complex scenarios in the standard worm-hole
switching, we 
propose another protocol which has only \emph{one
  series of progressions}. This less flexible switching protocols, called
\emph{\ourprotocols{}} (\spsp{}), achieves timing predictability by enforcing
that a flow $f_i$ is eligible to transmit on its
route \textbf{if and only if} it can be allocated for all links in ${\bf \Lambda}_i $
in-parallel.   
In other words, the links used by a flow
$f_i$ \emph{either} all simultaneously transmit one flit  of this flow (if it
exists) \emph{or} none of them transmits any flit of this flow. 
As a result, for a progression of $f_i$ in
a time unit, some links  in ${\bf \Lambda}_i $ may be reserved even though there is no 
flit to be transmitted over this link in this time unit,  
a behaviour similar to \emph{processor spinning}.

In order to formally define the schedules and analyze the schedulability, we use 
the following definition. 

\begin{definition}[Schedules]
\label{def:schedule-flow-links}
 A schedule $S_{\lambda_{j}}(t)$ is a function that 
 maps time $t$ in the time-domain to the flow that is 
 scheduled on the link $\lambda_{j}$ at that time. 
 Further, $S_{\lambda_{j}}(t) = \infty$ if $\lambda_j$ 
 is idle at time $t$. \qed
\end{definition}

We use the Iverson bracket to 
indicate whether a flow $f_i$ is scheduled on a link $\lambda_{j}$ 
at time $t$.\footnote{$[S_{\lambda_{j}}(t) = i]$ is $1$ if $f_i$ is scheduled on 
$\lambda_j$ at time $t$ and is $0$ otherwise.}
For convenience, we use $S_{{\bf \Lambda}_j}(t)$ to indicate 
the ordered set of schedules \mbox{$\lbrace S_{\lambda_{k}}(t)~|~\lambda_{k} \in {\bf \Lambda}_j \rbrace$}.
Moreover, we abbreviate our notation to a single value, i.e., 
$S_{{\bf \Lambda}_j}(t) = i$ to denote that all links in ${\bf \Lambda}_j$ 
schedule flow $f_i$ at time $t$.

\begin{definition}
  \label{def:SP2-a-bit-detail}
  A schedule $S$ implements the \spsp{} if for all $t \geq 0$, for each
  static routing path ${\bf \Lambda}_i$ of every flow $f_i$, the
  following implication holds:
\[
  (S_{\lambda_{k}}(t) = i \mbox{ for some }\lambda_{k} \in {\bf
  \Lambda}_i) \implies S_{{\bf \Lambda}_j}(t) = i
\]
\qed
\end{definition}

In general, the \spsp{} can be implemented with different
strategies.  The focus of this paper is not the implementation or
design of scheduling strategies to meet the schedule defined in
Definition~\ref{def:SP2-a-bit-detail}.  In the upcoming two
subsections, we discuss
two potential scheduling strategies that can be used to derive such schedules.

\subsection{Links to Gang Scheduling}  
To meet the deadline of a message of a flow $f_i$, that arrives at
time $t$, the concept of the \spsp{} requires to have $C_i+\eta_i-1$
time units to use all the links in $\Lambda_i$ simultaneously before
$t+D_i$. This is similar to the \textbf{rigid gang scheduling}
problem~\cite{DBLP:journals/lites/GoossensR16}, which can be defined
as follows:
\begin{itemize}
\item We are given a set of periodic/sporadic tasks to be executed on
  the given machines.
\item Each task has to simultaneously use a subset of the given
  machines as a \textbf{gang}. Whenever the task is executed, all of its required
  machines must be exclusively allocated to the task.
\end{itemize}
That is, we can consider that each of the links in ${\bf \Lambda}$
is a machine and each flow is a task. The links in $\Lambda_i$ form a
gang for flow $f_i$ and the execution time is $C_i+\eta_i-1$.

Finding the optimal schedule for the rigid gang scheduling problem has
been shown NP-hard in the strong sense even when all the tasks have
the same period and the same
deadline~\cite{Kubale:87:The-complexity-of-scheduling}. Moreover, even
special cases, like three
machines~\cite{BazewiczDell-Olmo:94:Corrigendum-to:-Scheduling} or
unit execution time per
task~\cite{HoogeveenVelde:94:Complexity-of-scheduling}, are also
NP-hard in the strong sense. The rigid gang scheduling problem for
implicit-deadline periodic real-time task systems (i.e., $D_i=T_i$ for
every task $\tau_i$) has been recently studied by
Goossens~and~Richard~\cite{DBLP:journals/lites/GoossensR16}. They
presented two algorithms, one is based on linear programming and
another is based on a heuristic algorithm. Moreover,
Harde~et~al.~\cite{DBLP:conf/rtns/HardeFBC18} constructed static
schedules by using a constrained-programming or an
integer-linear-programming (ILP) approach for harmonic real-time task
systems.
Another version of gang scheduling is the so-called \emph{global gang
  scheduling problem}
\cite{DBLP:journals/corr/RichardGK17,DBLP:conf/rtss/Dong017,DBLP:conf/rtss/KatoI09},
in which the set of machines used by a gang task is not fixed. A gang
task requires a certain amount of machines, and these machines can be
dynamically relocated at runtime. We note that the global gang
scheduling problem is \emph{unrelated} to the \spsp{}, since the links
used by a flow has to be defined from the source node to the
destination node.

\subsection{Work-Conserving Priority-Based Schedules}

Instead of optimizing the scheduling strategies in the routers in a
NoC for the \spsp{}, we will focus on the work-conserving priority-based \spsp{}. Such strategies can be dynamic-priority algorithms (i.e.,
two messages of flow $f_i$ may have different priorities) and
fixed-priority scheduling algorithms (i.e., all messages of flow $f_i$
have the same priority).

That is, each message has a priority. When two messages
intend to use one link at the same time, the higher-priority message
is scheduled and the lower-priority message is suspended. Whenever a
message is suspended in one of its links, it is suspended in all of
its links. 

We will focus on fixed-priority scheduling in  
the remainder of this paper. 
 We will focus on the theoretical benefit
of the \ourprotocols{} in Section~\ref{sec:scheduling-analysis}.

\section{\spsp{} Scheduling Analysis}
\label{sec:scheduling-analysis}

For a scheduling protocol $A$, a real-time schedulability analysis of a flow set
${\bf F}$ validates whether no flow in the flow set misses its deadline under any valid 
schedule generated by $A$. Such a validated 
flow set is hence called feasibly schedulable by $A$ or infeasible 
otherwise. Furthermore, an analysis of a schedulability test for an algorithm
$A$ is called \emph{sufficient} if all flow sets that are deemed to be feasibly schedulable 
by this test are actually feasibly schedulable. 
Likewise the test is called \emph{necessary}, 
if every flow set that is schedulable by algorithm $A$ is verified to be feasibly 
schedulable by the corresponding schedulability test. 
Hence, a \emph{necessary} and \emph{sufficient} schedulability test 
is called \emph{exact}. In the remainder of this paper, we will 
derive a sufficient schedulability test.

For each flow under analysis, we partition all other flows
into a direct contention domain and an indirect contention domain.
We define a (direct) contention domain of two flows by 
identifying the set of higher-priority flows that share at least 
one link and thus potentially directly interfere with each other.
Then, we consider the (indirect) contention domain of the flow under analysis, 
i.e., flows that do not directly share a link with this flow under analysis 
but interfere with flows in the (direct) contention domain.
In the remainder of this paper, we implicitly assume that the flow set 
is indexed such that a flow $f_i$ has higher priority than a flow 
$f_j$ if $i<j$.

In this section, we explain how the schedulability analysis for preemptive 
fixed-priority \ourprotocols{} can be related to the work-conserving fixed-priority preemptive 
uniprocessor self-suspension scheduling problem. 
In real-time scheduling theory, self-suspension denotes the property 
of an executable entity to be exempted 
from the scheduling decisions for a specified amount of time , i.e., 
\emph{self-suspension} time.  In our analysis, we use self-suspension 
to model a flow $f_j$ that is eligible to be scheduled using any work-conserving 
algorithm at a given time $t$ due to being the highest-priority 
flow and being active (released and not yet finished), 
but is not transmitted due to indirect contention. 
We prove that this (indirect) contention can be related to self-suspension behaviour 
in uniprocessor real-time scheduling theory and thus show 
the applicability of existing self-suspension aware schedulability analyses.
We formally define and prove the self-suspension equivalent property of 
fixed-priority scheduling using the \spsp{}.

We only briefly introduce the uniprocessor self-suspension problem 
and refer the reader to the existing
literature~\cite{Chen2018-suspension-review, 
ChenECRTS2016-suspension} 
for further information.
In short, self-suspension refers to the 
exemption of a ready schedulable entity from the scheduling decision 
for a certain amount of time. This exemption behavior is modeled as 
dynamic self-suspension and multi-segment self-suspension in the literature. 
In the former, the suspension pattern can be arbitrary and is only 
parametrically upper bounded by the total self-suspension time. 
This flexibility comes at the cost of more pessimism in the timing 
analyses. In the latter model, an upper bound of the duration and the 
number of a task's suspension intervals is known and can thus be used 
in the timing analyses. 
In this paper, we 
consider the dynamic self-suspension 
model and the corresponding timing analyses.

\begin{figure}[t]
\centering
\vspace{-0.2in}
\includegraphics[width=0.5\textwidth]{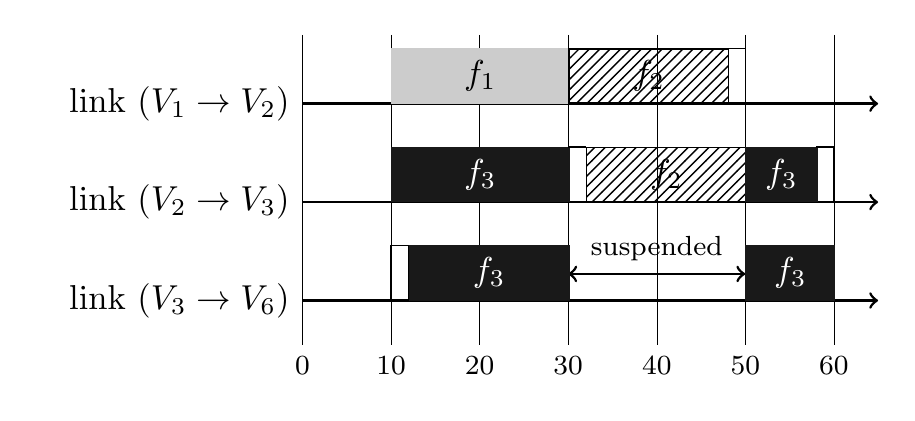}
\caption{An exemplary self-suspension instance for three flows 
using the \spsp{} and fixed-priority scheduling. The empty rectangles are
the time allocated for $f_2$ and $f_3$ but not used for
transmission since there is no available flit yet.}
\label{fig:self-suspension-example}
\end{figure}

\begin{example}[Self-Suspension Behaviour]
\label{example:self-suspension}
In Figure~\ref{fig:self-suspension-example}, 
three priority ordered flows $f_1, f_2 , f_3$ that transmit 
20, 19, and 29 flits respectively through the subset of routers 
$V_1, V_2, V_3$, and $V_6$ as illustrated Figure.~\ref{fig:noc-mesh-example}. 
Flow $f_1$ has the highest priority and $f_3$ has the lowest
priority. 
Flow $f_1$ transmits from $V_1$ to the router $V_2$, 
flow $f_2$ transmits from router $V_1$ through $V_2$ to $V_3$.
Moreover, flow $f_3$ transmits from $V_2$ through the 
router $V_3$ to $V_6$. In this example, all flows are released 
synchronously and are scheduled according to \ourprotocols{} 
using fixed-priority scheduling. 
Note that an actual transmission of a flit is denoted by a darkened 
box whereas the unfilled areas indicate the flows that are granted that link.
On the link from $V_1$ to $V_2$, flow $f_1$ 
precedes flow $f_2$ due to its higher priority.
Since $f_3$ does not share any link with $f_1$, flow
$f_3$ can transmit on its links.
After the finishing of $f_1$, flow $f_2$ attempts to 
transmit on its links and preempts flow $f_3$ 
on the link from $V_2$ to $V_3$. Due to the \spsp{}, 
$f_3$ is not eligible to transmit on the link from $V_3$ to $V_6$ 
despite being the highest-priority flow on that link.
Since the preemption by $f_2$ is transparent on that 
link, this behaviour is similar to the self-suspension 
property of real-time tasks. \qed
\end{example}

To formalize the direct contention domain of a flow $f_i$ under
analysis, we denote the set of higher-priority flows of $f_i$ that
share at least one link as $share_i = \lbrace f_j \in {\bf F}~|~ j <
i~, {\bf \Lambda}_j \cap {\bf \Lambda}_i \neq \emptyset \rbrace$.
Under the fixed-priority \spsp{}, each flow has a priority, assumed to
be unique here. In our analysis in this section, we assume
work-conserving fixed-priority arbitration:
\begin{itemize}
\item a flow transmits further if none of its link is used by any
  higher-priority flow, and
\item a flow does not transmit further if one of its link is allocated
  by another higher-priority flow.
\end{itemize}

Note that we analyze the schedulability of each flow individually 
under the assumption that the schedulability of all higher-priority 
flows has already been validated. In order to formally quantify 
the times $t$ such that a flow $f_j$ is eligible to transmit 
on all links in ${\bf \Lambda_i}$ in-parallel using 
\ourprotocols{} but is not transmitting, i.e., 
self-suspension behaviour, we give the following definition.

\begin{definition}
\label{def:lambda-i-self-suspended}
 A flow $f_j$ is said to be ${\bf \Lambda}_i$-self-suspended 
 at a time $t$, if the following conditions are satisfied:
 \begin{itemize}
  \item[Prop.~1:] $f_j$ is \emph{active}, i.e., released and not yet finished at time $t$
  \item[Prop.~2:] $f_j$ is the highest-priority flow on all the links in ${\bf \Lambda}_i$, i.e., $\min\{S_{{\bf \Lambda}_i}(t)\} > j$ 
  \item[Prop.~3:] ${\bf \Lambda}_i \supsetneq {\bf \Lambda}_j$ and ${\bf \Lambda}_i \cap {\bf \Lambda}_j \neq \emptyset$, i.e., ${\bf \Lambda}_j$ is 
    a true non-empty subset of ${\bf \Lambda}_i$
 \item[Prop.~4:] for all $\lambda_k \in {\bf \Lambda}_i$ $S_{\lambda_k}(t) \neq j$ at time $t$, i.e., $f_j$ is not scheduled on all the links \qed
 \end{itemize}
\end{definition}


Note that we require ${\bf \Lambda_i} \neq {\bf \Lambda_i}$, 
since the self-suspension like behaviour can only occur due to 
contention that is transparent to the flow under analysis. 
In the following theorem, we formally bound the set of 
flows that exhibit self-suspension behaviour in order to 
safely account for the additional interference.

\begin{lemma}
\label{lemma:incusion-lemma}
 Let a flow $f_j$ satisfy the properties $f_j \in share_i$ and 
 for all $f_n \in share_j : {\bf \Lambda}_i \cap {\bf \Lambda}_n \neq \emptyset$
 then it follows that $f_n \in share_i$. 
\end{lemma}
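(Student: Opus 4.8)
The plan is to prove the lemma by a direct unfolding of the definitions of $share_i$ and $share_j$, combined with transitivity of the priority order. No property of the \spsp{} schedule itself is needed here; the statement is a purely combinatorial fact about routing-path intersections and priority indices, and the second hypothesis has been phrased precisely so that the nontrivial-looking part (the link intersection) is handed to us directly.

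Concretely, I would fix an arbitrary flow $f_n \in share_j$ and show $f_n \in share_i$. By the definition of $share_j$ this means $f_n \in {\bf F}$, $n < j$, and ${\bf \Lambda}_n \cap {\bf \Lambda}_j \neq \emptyset$. From the first hypothesis $f_j \in share_i$ we obtain $j < i$. Chaining the two strict inequalities yields $n < j < i$, hence $n < i$, which establishes the priority part of membership in $share_i$. For the link-sharing part, the second hypothesis, read as quantified over all $f_n \in share_j$, applies to this particular $f_n$ and gives ${\bf \Lambda}_i \cap {\bf \Lambda}_n \neq \emptyset$. Having verified $f_n \in {\bf F}$, $n < i$, and ${\bf \Lambda}_i \cap {\bf \Lambda}_n \neq \emptyset$, we conclude $f_n \in share_i$. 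Since $f_n$ was arbitrary in $share_j$, the claim follows.

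I do not expect a genuine obstacle in this argument; the only points requiring care are (i) that the index chain is used correctly, i.e., $j < i$ must be drawn from $f_j \in share_i$ rather than merely from $f_j$ being a higher-priority flow in some looser sense, and (ii) that the second hypothesis is universally quantified over $share_j$, so that it may be specialized to the $f_n$ under consideration. I would also add a sentence explaining why the lemma is worth stating despite being elementary: it shows that whenever a potential indirect contender $f_j$ of $f_i$ has the property that all of its own direct contenders already share a link with $f_i$, those contenders are absorbed into $share_i$, so no genuinely new indirect-interference term is generated — this is the bookkeeping fact that the subsequent self-suspension-based schedulability analysis relies on to keep the interference accounting finite and sound.
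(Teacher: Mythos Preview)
Your proof is correct and follows essentially the same approach as the paper: both simply unfold the definitions of $share_i$ and $share_j$, chain the priority-index inequalities, and invoke the second hypothesis to supply the link-intersection condition. Your version is in fact more careful than the paper's, which contains typos in the inequality directions (it writes $j > i$ and $n > j$ where $j < i$ and $n < j$ are meant) --- precisely the pitfall your caveat (i) guards against.
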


\begin{proof}
 This simply comes from the definitions. 
 By the first property, it follows that $j > i$ and ${\bf \Lambda}_j \cap {\bf \Lambda}_i \neq \emptyset$.
 The second property implies that $\forall f_n, n>j$ such that ${\bf \Lambda}_n \cap {\bf \Lambda}_j \neq \emptyset$, 
 the condition ${\bf \Lambda}_i \cap {\bf \Lambda}_n \neq \emptyset$ holds, which 
 implies that $f_n \in share_i$.
\end{proof}

\begin{theorem}
\label{thm:sup-lambda-self-suspending}
 In a schedule that is generated by any fixed-priority algorithm using \ourprotocols{}, 
 the set of flows that are ${\bf \Lambda}_i$-self-suspending is not larger than \newline 
 ${\bf SS(\Lambda_i)} = \lbrace f_{\ell} \in share_i ~|~ \exists f_n \in share_{\ell} :
 {\bf \Lambda}_n \cap {\bf \Lambda}_i = \emptyset \rbrace$.
\end{theorem}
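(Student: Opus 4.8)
The plan is to fix an arbitrary time $t$ and an arbitrary flow $f_j$ that is ${\bf \Lambda}_i$-self-suspended at $t$ in a schedule produced by a work-conserving fixed-priority \spsp{} algorithm, and to show directly that $f_j \in {\bf SS}({\bf \Lambda}_i)$; since $t$ and $f_j$ are arbitrary, this bounds the whole set. Membership in ${\bf SS}({\bf \Lambda}_i)$ asks for two things: (a) $f_j \in share_i$, and (b) the existence of a flow $f_n \in share_j$ with ${\bf \Lambda}_n \cap {\bf \Lambda}_i = \emptyset$. Part (a) is the easy half: ${\bf \Lambda}_j \cap {\bf \Lambda}_i \neq \emptyset$ is immediate from Prop.~3 of Definition~\ref{def:lambda-i-self-suspended}, and the priority relation $j<i$ follows once we restrict to times $t$ at which $f_i$ is itself active (the only times that matter for the response-time analysis of $f_i$): such an active $f_i$ either transmits on all of ${\bf \Lambda}_i$ (so $\min\{S_{{\bf \Lambda}_i}(t)\} = i$) or, by work-conservation, is held on some link of ${\bf \Lambda}_i$ by a higher-priority flow (so $\min\{S_{{\bf \Lambda}_i}(t)\} < i$), and in either case Prop.~2 gives $j < \min\{S_{{\bf \Lambda}_i}(t)\} \le i$.

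The heart of the argument is part (b), and it rests on combining Prop.~1, Prop.~4, and the \spsp{} all-or-nothing property with work-conservation. First I would observe that $f_j$ transmits nothing at $t$: pick a link $\mu \in {\bf \Lambda}_j \cap {\bf \Lambda}_i$ (nonempty by Prop.~3); Prop.~4 gives $S_\mu(t) \neq j$, and the \spsp{} property (Definition~\ref{def:SP2-a-bit-detail}) then forces $S_{\lambda}(t) \neq j$ for \emph{every} $\lambda \in {\bf \Lambda}_j$. Since $f_j$ is active (Prop.~1) but does not transmit, the work-conserving fixed-priority rule stated after Definition~\ref{def:SP2-a-bit-detail} — a flow transmits further unless one of its links is held by a higher-priority flow — yields a flow $f_n$ with $n < j$ and a link $\lambda^{\star} \in {\bf \Lambda}_j$ with $S_{\lambda^{\star}}(t) = n$. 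In particular ${\bf \Lambda}_n \cap {\bf \Lambda}_j \neq \emptyset$ and $n<j$, so $f_n \in share_j$.

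It remains to show ${\bf \Lambda}_n \cap {\bf \Lambda}_i = \emptyset$, which is where the \spsp{} property is invoked a second time, now on $f_n$: since $f_n$ is scheduled on $\lambda^{\star} \in {\bf \Lambda}_n$ at $t$, Definition~\ref{def:SP2-a-bit-detail} forces $S_\lambda(t) = n$ for every $\lambda \in {\bf \Lambda}_n$. If some link of ${\bf \Lambda}_n$ also lay in ${\bf \Lambda}_i$, that link would witness $n \in S_{{\bf \Lambda}_i}(t)$, hence $\min\{S_{{\bf \Lambda}_i}(t)\} \le n < j$, contradicting Prop.~2. Therefore ${\bf \Lambda}_n \cap {\bf \Lambda}_i = \emptyset$, which together with $f_n \in share_j$ and $f_j \in share_i$ gives $f_j \in {\bf SS}({\bf \Lambda}_i)$. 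Lemma~\ref{lemma:incusion-lemma} is the structural counterpart of this last step: it says every higher-priority flow of $f_j$ that does touch ${\bf \Lambda}_i$ already lies in $share_i$, so what the argument really exhibits is a flow in $share_j \setminus share_i$, and one may alternatively phrase part (b) through that lemma.

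The step I expect to be the main obstacle is the passage from ``$f_j$ active and not transmitting'' to ``$f_j$ is blocked by a concrete higher-priority flow $f_n$ on one of its own links.'' This is precisely where the work-conserving fixed-priority hypothesis is indispensable — without work-conservation a flow could idle gratuitously and the bound would fail — so the write-up must pin down the scheduling model exactly (the two bullets following Definition~\ref{def:SP2-a-bit-detail}) and be careful that ``holds/uses a link'' is read in the schedule-function sense $S_\lambda(t)=n$, which under \spsp{} may include links that $f_n$ has merely reserved while spinning. A secondary subtlety, which I would flag explicitly, is that part (a) needs the restriction to times when $f_i$ is active; lower-priority ${\bf \Lambda}_i$-self-suspending flows at other instants are irrelevant to the response time of $f_i$ and are simply discarded.
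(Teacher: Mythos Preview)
Your proof is correct and is the direct form of the paper's contrapositive argument: both hinge on the observation that a blocked-but-active $f_j$ must be held by some higher-priority $f_n$ on a link of ${\bf \Lambda}_j$, and that by Prop.~2 together with the \spsp{} all-or-nothing property this $f_n$ cannot touch ${\bf \Lambda}_i$. Your write-up is more explicit than the paper's about invoking work-conservation and about the priority relation $j<i$ (which the paper's Case~1 silently assumes when it infers ${\bf \Lambda}_j \cap {\bf \Lambda}_i = \emptyset$ from $f_j \notin share_i$); your restriction to instants where $f_i$ is active is a reasonable way to close that gap, and your remark that Lemma~\ref{lemma:incusion-lemma} is the structural counterpart of part~(b) is exactly how the paper uses it in its contrapositive.
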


\begin{proof}
 We prove this theorem by contrapositive, i.e., we show
 if $f_{j} \notin \lbrace f_{\ell} \in share_i~|~\exists f_n \in share_{\ell} : {\bf \Lambda}_n \cap {\bf \Lambda}_i = \emptyset \rbrace$ 
 then $f_j$ is not ${\bf \Lambda}_i$-self-suspending.
 Therefore, we must analyze the following two cases:
 \begin{enumerate}
  \item $f_j \notin share_i$,
  \item $f_j \in share _i~\text{and}~\forall f_n \in share_j : {\bf \Lambda}_i \cap {\bf \Lambda}_n \neq \emptyset$.
 \end{enumerate} 
 In the first case, let $f_j \notin share_i$ and thus by definition
 ${\bf \Lambda}_j \cap {\bf \Lambda}_i = \emptyset$. 
 Therefore, $f_j$ is not ${\bf \Lambda}_i$-self-suspending at any time $t$ by definition.
 
 In the second case, assume the existence of a time instant $t^*$ such that $f_j$ 
 is ${\bf \Lambda}_i$-self-suspended at time $t^*$ and satisfies the properties 
 stated in the second case. Then, $f_i$ is active at time $t^*$ 
 and for all $\lambda_k \in {\bf \Lambda_i}$ either $S_{\lambda_k}(t^*) = 0$ or $S_{\lambda_k}(t^*) > j$.
 Further, by the properties stated for the second case and the results from 
 Lemma.~\ref{lemma:incusion-lemma}, we know that flow $f_n \in share_i$.
 This is a contradiction, because no flow $f_n$ could have been active at time $t^*$ since 
 otherwise the schedule would have been $S_{\lambda_k}(t^*) = n$ for all $\lambda_k \in {\bf \Lambda}_i$ and $n < j$.
\end{proof}

For further analysis, it is required to bound the maximal 
amount of time a flow $f_j$ may be ${\bf \Lambda}_i$-self-suspending.

\begin{theorem}
 Let each higher-priority flow $f_j$ of the flow under analysis $f_i$ be 
 feasibly schedulable using \ourprotocols{}.
 Then, the cumulative amount of time that $f_j$ is ${\bf \Lambda}_i$-self-suspending 
 is at most $R_j-C_j$, where $R_j$ denotes the worst-case response-time of 
 flow $f_j$.
\end{theorem}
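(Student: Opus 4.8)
The plan is to charge every $\Lambda_i$-self-suspending time unit of $f_j$ to the activation window of the message of $f_j$ that is active during that time unit, and then to exhibit, inside each such window, at least $C_j$ time units in which $f_j$ is actually transmitting and therefore provably \emph{not} $\Lambda_i$-self-suspending. First I would fix one message of $f_j$, released at time $r$; by the feasibility hypothesis this message completes no later than $r+R_j$, so it is active only during $[r,\,r+R_j)$. By Prop.~1 of Definition~\ref{def:lambda-i-self-suspended}, $f_j$ can be $\Lambda_i$-self-suspended only while some message is active, so the entire $\Lambda_i$-self-suspending time attributable to this message lies inside this interval of length at most $R_j$. Because deadlines are constrained and all higher-priority flows are assumed feasibly schedulable, the active windows of distinct messages of $f_j$ are pairwise disjoint, so it suffices to bound the self-suspending time accumulated during a single activation and then sum.

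Next I would show that SP$^2$ forces at least $C_j$ transmission time units inside $[r,\,r+R_j)$. Each of the $C_j$ flits of the message crosses the first link $\lambda_{j1}\in{\bf\Lambda}_j$ exactly once (a link appears at most once in a static routing path), and on the homogeneous NoC at most one flit traverses a link per time unit; hence there are exactly $C_j$ distinct time units at which $\lambda_{j1}$ carries a flit of $f_j$, and all of them fall in $[r,\,r+R_j)$ since no flit can leave the source before release nor after completion. At each such time unit, the all-or-nothing property of Definition~\ref{def:SP2-a-bit-detail} forces $f_j$ to be scheduled on \emph{every} link of ${\bf\Lambda}_j$. By Prop.~3 of Definition~\ref{def:lambda-i-self-suspended} we have ${\bf\Lambda}_j\subsetneq{\bf\Lambda}_i$, so at that time $f_j$ is scheduled on some link of ${\bf\Lambda}_i$, which directly contradicts Prop.~4. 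Consequently none of these $C_j$ transmission time units can be $\Lambda_i$-self-suspending.

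Combining the two steps, within the activation window (length $\le R_j$) the set of $\Lambda_i$-self-suspending time units and the set of (at least $C_j$) transmission time units are disjoint, so the self-suspending time of this message is at most $R_j-C_j$; summing over the disjoint activation windows yields the claimed bound on the cumulative $\Lambda_i$-self-suspending time. (One could instead count the $C_j+\eta_j-1$ progression steps of the unique SP$^2$ series of progressions for a marginally tighter estimate, but $C_j$ transmissions on the source link already suffice.)

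I expect the only delicate point to be the bookkeeping behind the word ``cumulative'': making precise that at most one message of $f_j$ is active at any instant, that each such message's active window has length at most $R_j$, and that these windows are disjoint. This rests on the constrained-deadline assumption together with the feasibility of all higher-priority flows (so that response times are finite and bounded by $R_j$); once that is in place, everything else follows immediately from the all-or-nothing property and a literal reading of Definition~\ref{def:lambda-i-self-suspended}.
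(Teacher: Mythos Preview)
Your argument is correct and follows essentially the same route as the paper: fix one activation window $[t_j,\,t_j+R_j)$ of a message of $f_j$, observe that within it $f_j$ is scheduled on all of ${\bf\Lambda}_j$ for (at least) $C_j$ time units, and use ${\bf\Lambda}_i\cap{\bf\Lambda}_j\neq\emptyset$ together with the all-or-nothing property to show those time units violate Prop.~4 and hence cannot be ${\bf\Lambda}_i$-self-suspending. The paper expresses the same idea with Iverson-bracket integrals rather than by counting flits on $\lambda_{j1}$, and it leaves the per-message bookkeeping you spell out (disjoint activation windows under constrained deadlines) implicit; your write-up is, if anything, a bit more careful on that point.
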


\begin{proof}
 We consider the following two cases:
\begin{enumerate} 
 \item Flow $f_j$ is not ${\bf \Lambda}_i$-self-suspending and thus 
   the self-trivially upper bounded by $R_j-C_j$.
 \item There exists at least one point in time $t \geq 0$
 such that $f_j$ is ${\bf \Lambda}_i$-self-suspending.
\end{enumerate} 
Let $S_j = \lbrace t \in [t_j, t_j+R_j)~|~f_j~\text{is}~{\bf \Lambda}_i\text{-self-suspending} \rbrace$ 
where $t_j$ denotes the release of a packet of $f_j$.
By Definition.~\ref{def:lambda-i-self-suspended}, we know that 
$f_j$ satiesfies Prop.~1 - Prop.~4 at time $t$ for all $t \in S_j$. 
Furthermore, since ${\bf \Lambda}_i \cap {\bf \Lambda_j} \neq \emptyset$ (Prop.~3) and 
by the \emph{all-or-nothing} property of \ourprotocols{}, 
it must be that \mbox{$[S_{{\bf \Lambda}_j}(t) = j] = 0$}.

Since by assumption, the schedulability of each higher-priority flow has already been validated, 
$f_j$ is feasibly schedulable, i.e., $\int_{t_j}^{t_j+R_j} 1 - [S_{{\bf \Lambda}_j}(\tau) = j]~d\tau = R_j - C_j$.
Due to the \spsp{} property, we know that all $t$ that satisfy
$(1-[(S_{{\bf \Lambda}_i}(t) = j]) = 1$ also satisfy $(1-[(S_{{\bf \Lambda}_j}(t) = j]) = 1$.
In conclusion, we have that 
$\int_{t \in S_j} 1-[(S_{{\bf \Lambda}_i}(\tau) = j]~d\tau \leq \int_{t \in S_j} 1-[S_{{\bf \Lambda}_j}(\tau) = j]~d\tau \leq R_j- C_j$ 
which concludes the proof.
\end{proof}

\begin{corollary}
\label{corollary:schedulable}
A sporadic constrained-deadline flow set ${\bf F}$ is fixed-priority schedulable using 
\ourprotocols{}, if for each flow $f_i$ the transformed higher-priority flow set:
\begin{equation}
f_j' =
\begin{cases}
     (C_j, T_j, D_j, S_j) & f_j \in {\bf SS(\Lambda_i)} \\
     (C_j, T_j, D_j) & \text{otherwise},
  \end{cases}
\end{equation}
is schedulable, where $S_j = R_j-C_j$.
\end{corollary}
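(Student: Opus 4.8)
The plan is to reduce the schedulability of $f_i$ under the fixed-priority \spsp{} to a uniprocessor self-suspension schedulability instance, so that the transformed flow set above serves as a safe abstraction. First I would argue that, from the point of view of the flow $f_i$ under analysis, the links in ${\bf \Lambda}_i$ collectively behave like a single processor: by the \emph{all-or-nothing} property of the \spsp{} (Definition~\ref{def:SP2-a-bit-detail}), whenever $f_i$ transmits it occupies all of ${\bf \Lambda}_i$ for exactly one flit's worth of time, and by the progression analysis in Section~\ref{sec:progession-model} the unique series of progressions for $f_i$ takes exactly $C_i+\eta_i-1$ time units of such simultaneous transmission. Hence $f_i$ behaves like a uniprocessor job with execution requirement $C_i+\eta_i-1$ that is blocked precisely at those times when some link of ${\bf \Lambda}_i$ is held by another flow.

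Second I would classify the times at which $f_i$ is blocked on ${\bf \Lambda}_i$. A link $\lambda_k \in {\bf \Lambda}_i$ is held at time $t$ by a flow $f_j$ with $j<i$; such $f_j$ is by definition in $share_i$. I split these higher-priority flows into two groups. For $f_j \in share_i \setminus {\bf SS}({\bf \Lambda}_i)$, Theorem~\ref{thm:sup-lambda-self-suspending} guarantees $f_j$ is never ${\bf \Lambda}_i$-self-suspending, so whenever $f_j$ holds any link of ${\bf \Lambda}_i$ it holds all of ${\bf \Lambda}_j$ and, because ${\bf \Lambda}_j$ and ${\bf \Lambda}_i$ share a link and $f_j$ is highest priority there, $f_j$ is actively transmitting; such a flow interferes exactly like an ordinary higher-priority sporadic task with execution time $C_j+\eta_j-1$. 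For $f_j \in {\bf SS}({\bf \Lambda}_i)$, $f_j$ may additionally occupy links of ${\bf \Lambda}_i$ while itself suspended; by the preceding Theorem the cumulative duration of this extra occupation within any release-to-response window of $f_j$ is at most $R_j-C_j$. Aggregating $f_j$'s transmission and its ${\bf \Lambda}_i$-self-suspension, the total time $f_j$ denies ${\bf \Lambda}_i$ to $f_i$ over a window is bounded by the interference of a self-suspending task with worst-case execution time $C_j+\eta_j-1$ and suspension budget $S_j = R_j-C_j$. I would then invoke the correctness of the dynamic self-suspension schedulability test referenced in Section~\ref{sec:scheduling-analysis} (e.g.\ \cite{Chen2018-suspension-review, ChenECRTS2016-suspension}) applied to the transformed higher-priority set together with $f_i$: if that test passes, $f_i$ meets its deadline under every valid \spsp{} schedule. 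Proceeding by induction on priority level — the base case being the highest-priority flow, which is never blocked — establishes schedulability of the whole set ${\bf F}$.

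The main obstacle I expect is making precise that the \emph{union} of all higher-priority interference can be validly modeled as the union of per-task interferences in the self-suspension instance, i.e.\ that a single physical time unit in which $f_i$ is denied ${\bf \Lambda}_i$ is ``charged'' to exactly one transformed task and no time unit is double-counted or missed. Concretely, one must show that the busy-window accounting used by the self-suspension test is an \emph{upper} bound on the true ${\bf \Lambda}_i$-denial time: a time unit where some $\lambda_k$ is idle but some other $\lambda_{k'} \in {\bf \Lambda}_i$ is held still blocks $f_i$ and must be attributable to the holder of $\lambda_{k'}$, and when several links of ${\bf \Lambda}_i$ are held simultaneously by different flows the contributions should not be added beyond one unit of real time — which is automatically handled because the self-suspension response-time bound sums interferences and thus can only overestimate. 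I would also need to double-check the edge cases implicit in Definition~\ref{def:lambda-i-self-suspended}, in particular the requirement ${\bf \Lambda}_i \supsetneq {\bf \Lambda}_j$, to confirm that flows with ${\bf \Lambda}_j \supseteq {\bf \Lambda}_i$ or with ${\bf \Lambda}_j = {\bf \Lambda}_i$ contribute only ordinary (non-suspension) interference and are correctly placed in the ``otherwise'' branch of the transformation.
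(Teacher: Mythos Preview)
Your high-level strategy---collapse ${\bf \Lambda}_i$ to a virtual uniprocessor via the all-or-nothing property, split $share_i$ using Theorem~\ref{thm:sup-lambda-self-suspending}, bound the suspension budget via the preceding theorem, and then invoke a uniprocessor self-suspension test by induction on priority---is exactly the route the paper takes; the corollary is stated without a separate proof and is meant to follow immediately from those two theorems together with the cited self-suspension analyses.

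There is, however, a concrete conceptual error in your second paragraph. You write that for $f_j \in {\bf SS}({\bf \Lambda}_i)$, ``$f_j$ may additionally occupy links of ${\bf \Lambda}_i$ while itself suspended'' and that this ``extra occupation'' lasts up to $R_j-C_j$. This contradicts Property~4 of Definition~\ref{def:lambda-i-self-suspended}: when $f_j$ is ${\bf \Lambda}_i$-self-suspended we have $S_{\lambda_k}(t)\neq j$ for every $\lambda_k\in{\bf \Lambda}_i$, so $f_j$ holds \emph{no} link of ${\bf \Lambda}_i$ at that instant and does not deny anything to $f_i$ during the suspension itself. The correct reading is that, on the virtual uniprocessor ${\bf \Lambda}_i$, the higher-priority ``task'' $f_j$ is active but not executing---precisely the dynamic self-suspension behaviour---so its genuine interference (still only $C_j$ per message) can be \emph{shifted} within $f_i$'s busy window by up to $R_j-C_j$; this is the jitter that the transformation $S_j=R_j-C_j$ encodes, not an additive extra-denial term. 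Your subsequent worry about charging each blocked time unit to exactly one transformed task is therefore aimed at the wrong mechanism: the self-suspension tests in~\cite{ChenECRTS2016-suspension} do not sum ``occupation plus suspension'' per job, they inflate the job-count via the jitter term. Once you restate the role of $S_j$ this way, the rest of your argument (and the induction) goes through as you outline.
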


The worst-case response time and schedulability of each flow 
$f_k$ has to be verified under the assumption that the higher-priority 
flows $f_1, f_2, \ldots, f_{k-1}$ are already verified to be schedulable.
Based on Corollary~\ref{corollary:schedulable}, any schedulability test 
that verifies the schedulability of sporadic constrained-deadline self-suspending task sets 
on uniprocessor systems with preemptive fixed-priority scheduling can be used, 
e.g., the state-of-the-art tests by Chen et al.~\cite{ChenECRTS2016-suspension}.

\section{Analytical Advantages of \spsp{}}
\label{sec:analytical-advantages}

In this section we shortly compare our fixed-priority \spsp{} and
schedulability analysis with some of the state-of-the-art
fixed-priority schedulability analyses for wormhole-switching NoC with
virtual channels proposed by Xiong et al.~\cite{Xiong:2017} and
Indrusiak et al.~\cite{Indrusiak:DATE:2018}. Unfortunately, we have to
admit to not be able to comprehend the analyses presented by Nikol\'ic
et al.~\cite{Nikolic2019} and are thus incapable to compare with their
methods analytically.  We note that we do not intend to directly compare the
state-of-the-art analyses due the different protocols and models but
to only examplify the analytical gains and benefits of the \spsp{}
compared to one-hop scheduling protocols.

Let $share_k$ be the set of higher-priority flows whose routing path
intersects with the routing path of $f_k$.  Let $share_k^1$ be $f_j
\in \{share_k \setminus \{ f_{\ell} \in share_k~|~share_{\ell}
\setminus share_k = \emptyset\}\}$. This notation follows~\cite{Nikolic2019}. In the plaintext, $share_k^1$ consists
of the flows in $share_k$ by excluding those flows $f_\ell$ in which
higher-priority flows that intersect with $f_\ell$, i.e., $share_\ell$
are also in $share_k$. This means that if $f_\ell$ is in $share_k^1$,
then $f_\ell$ is in $share_k$ and there exists one flow $f_n$ that is not
in $share_k$ but in $share_\ell$. Therefore, this notation is exactly
${\bf SS}({\bf \Lambda}_k)$ defined in
Theorem~\ref{thm:sup-lambda-self-suspending}, i.e., 
\[
{\bf SS}({\bf \Lambda}_k) = share_k^1
\]

According to Eq. (4) and Eq. (5) in~\cite{Nikolic2019}, the analyses
of the worst-case response time for \emph{preemptive worm-hole
  switching} from \cite{Xiong:2017,Indrusiak:DATE:2018} can be
computed by solving the minimum value $t>0$ of the following
function:\footnote{The term $J_j^R$ in Eq. (4) in \cite{Nikolic2019}
  is removed here since we consider sporadic flows without release jitter.}
\begin{equation}
  \label{eq:nikolic-baseline}
  \small
  t = C_k + \sum_{f_j \in share_k} \ceiling{\frac{t+ J_{j \rightarrow i}^I}{T_j}} \cdot (C_j + B_{j \rightarrow i})
\end{equation} where $B_{j \rightarrow i} \geq 0$ is the interference 
due to buffering, i.e., backpressure, 
and 
\begin{equation}\label{eq:nikolic-jitter}
J_{j \rightarrow i}^I = 
  \begin{cases}
    R_j - C_j &\mbox{ if } f_j \in share_k^1,\\
    0 &\mbox{ otherwise}.
  \end{cases}
\end{equation}

Now, we consider the following response time analysis from Chen et
al. in \cite{ChenECRTS2016-suspension}:\footnote{For notational
  consistency with \cite{ChenECRTS2016-suspension}, we here use the
  notation from \cite{ChenECRTS2016-suspension} for self-suspending
  task systems and assume that there are $k-1$ higher-priority flows
  in $share_k$. }
\begin{equation}\label{eq:suspension-test}
t = C_k + S_k+\sum_{i=1}^{k-1}\ceiling{\frac{t+Q_i^{\vec{x}}+(1-x_i)(R_i-C_i)}{T_i}} C_i 
\end{equation}
where $Q_i^{\vec{x}} = \sum_{j=i}^{k-1} (S_j \times x_j)$ and for a
certain binary assignment of $x_i \in \setof{0,1}$ for
$i=1,2,\ldots,k-1$. 



According to Corollary~\ref{corollary:schedulable} and ${\bf SS}({\bf
  \Lambda}_k) = share_k^1$, we have $S_j=R_j-C_j$ if $f_j \in share_k^1$ and $S_j=0$ if $f_j \notin share_k^1$. Moreover, since $f_k$ is not in ${\bf SS}({\bf
  \Lambda}_k)$, we have $S_k=0$.  If $f_j$ is in $share_k^1$, we set $x_j$ to $0$; otherwise, we set $x_j$ to
$1$. In such a setting of $\vec{x}$, the value $Q_i^{\vec{x}}$ in
Eq.~\eqref{eq:suspension-test} is always $0$. Together with $S_k=0$ by
definition and the definition in Eq.~\eqref{eq:nikolic-jitter}, the
analysis in Eq.~\eqref{eq:suspension-test} becomes:

{\footnotesize \begin{align} 
t  = &C_k + \sum_{f_j \in {\bf SS}({\bf
  \Lambda}_k)}
    \ceiling{\frac{t+R_j-C_j}{T_j}} C_j + \sum_{f_j \in
      share_k\setminus {\bf SS}({\bf
  \Lambda}_k)} \ceiling{\frac{t}{T_j}}
    C_j \nonumber\\
= &C_k + \sum_{f_j \in share_k^1}
    \ceiling{\frac{t+R_j-C_j}{T_j}} C_j + \sum_{f_j \in
      share_k\setminus share_k^1} \ceiling{\frac{t}{T_j}}
    C_j \nonumber\\
    = & C_k + \sum_{f_j \in share_k} \ceiling{\frac{t+J_{j
          \rightarrow i}^I}{T_j}} C_j\nonumber\\
    \leq & C_k + \sum_{f_j \in share_k} \ceiling{\frac{t+J_{j
          \rightarrow i}^I}{T_j}} (C_j+B_{j\rightarrow i}) = \mbox{RHS
      of Eq.\eqref{eq:nikolic-baseline}}\nonumber
\end{align}}

Therefore, the worst-case response time analysis from
\cite{Xiong:2017,Indrusiak:DATE:2018} is dominated by our analysis
from Corollary~\ref{corollary:schedulable} by applying
suspension-aware response time analysis from Chen et
al.~\cite{ChenECRTS2016-suspension}.

\section{Implementation Considerations}
\label{sec:implementation-considerations}

The architectural implementation of the priority-based \ourprotocols(\spsp{}) providing the 
\emph{all-or-nothing} property requires to rethink previous router designs.
In the state-of-the-art wormhole switching protocols, the decision 
at each router is local, i.e., each router simply chooses the highest-priority 
flow on any outgoing link. 
In contrast, the \emph{all-or-nothing} property requires 
global decision making. 

The \spsp{} is a general concept, and there could be different
possible realizations.  One possibility is to use centralized
arbitration which decides and dispatches the priority information to
the routers. However, this may incur high hardware cost.
Possible implementations are subject of future research efforts 
and beyond the conceptual scope of this paper.

\section{Conclusion}
\label{sec:conclusion}

In this paper, we discuss the fundamental difficulty of worst-case
timing analysis of flit-based pipelined transmissions over multiple
links in-parallel in network-on-chips.  The space of possible
\emph{progression states} that need to be covered by an analysis hints
to the mismatch with uniprocessor scheduling theory and their
assumptions, thus making analyses complex and prone to being
optimistic.  To that end, we propose \emph{\ourprotocols{}} (\spsp{}),
in which the links used by a message \emph{either} all simultaneously
transmit one flit of this message \emph{or} none of them transmits any
flit of this message. For this family of protocols, we formally prove
the matching with uniprocessor scheduling assumptions and theory.
Furthermore, we show the relation between the uniprocessor
self-suspension scheduling problem and the \spsp{} scheduling and
provide formal proofs to confirm this relation. In addition, we provide a
sufficient schedulability analysis for fixed-priority \spsp{}
scheduling. 

We note that the existing link-based switching and the proposed
\spsp{} are in fact two extreme scenarios with respect to the series of
progressions. The \spsp{} approach always results in the fastest
series of progressions in any case by sacrificing the possibility to
send part of the messages even when only one link is blocked by
another higher-priority message/flow.  The link-based switching
mechanism (worm-hole protocol) allows the flexibility to send only one
flit of a flow forward at a time unit in the NoC, but it has to
potentially consider the slowest series of progressions of the flow in
the worst case. It may be possible to design timing predictable
systems with good average-case performance by pruning unnecessary
progressions that have to be considered in the protocol. However, this
alternative was not considered in this paper. 

We strongly believe that a timing-predictable switching protocol in
NoCs should be carefully designed so that NoC-based many-core systems
can yield predictable performance. This paper provides protocols that
can be implemented with different strategies.  We believe that our
proposals can be a first step towards predictable switching protocols
of NoCs. In our future work, we will explore possible design options
and their tradeoffs in the schedulability analyses and design
complexity/cost. 


\clearpage

\bibliographystyle{abbrv}
\bibliography{real-time.bib}

\begin{thebibliography}{10}

\bibitem{Wormhole-BP-2015}
L.~Abdallah, M.~Jan, J.~Ermont, and C.~Fraboul.
\newblock Wormhole networks properties and their use for optimizing worst case
  delay analysis of many-cores.
\newblock In {\em Proc. of the 10th IEEE International Symposium on Industrial
  Embedded Systems (SIES 2015)}, pages 1--10, June 2015.

\bibitem{Axer:2014}
P.~Axer, R.~Ernst, H.~Falk, A.~Girault, D.~Grund, N.~Guan, B.~Jonsson,
  P.~Marwedel, J.~Reineke, C.~Rochange, M.~Sebastian, R.~V. Hanxleden,
  R.~Wilhelm, and W.~Yi.
\newblock Building timing predictable embedded systems.
\newblock {\em ACM Trans. Embed. Comput. Syst.}, 13(4):82:1--82:37, Mar. 2014.

\bibitem{NoC-WCTT-RC-NC-WFCS-2016}
H.~Ayed, J.~Ermont, J.-l. Scharbarg, and C.~Fraboul.
\newblock Towards a unified approach for worst-case analysis of tilera-like and
  kalray-like noc architectures.
\newblock In {\em World Conf. on Factory Communication Systems (WFCS), WiP
  Session}, Aveiro, Portugal, May 2016. IEEE.

\bibitem{BazewiczDell-Olmo:94:Corrigendum-to:-Scheduling}
J.~B{\l}a{\.z}ewicz, P.~Dell'~Olmo, M.~Drozdowski, and M.~Speranza.
\newblock Corrigendum to: Scheduling multiprocessor tasks on three dedicated
  processors.
\newblock {\em Inf. Process. Lett.}, 49(5):269--270, 1994.

\bibitem{MPPA-ERTS-WCTT-2018}
M.~Boyer, B.~Dupont~de Dinechin, A.~Graillat, and L.~Havet.
\newblock Computing routes and delay bounds for the network-on-chip of the
  kalray mppa2 processor.
\newblock In {\em Proc. of the 9th European Congress on Embedded Real Time
  Software and Systems ($\textrm{ERTS}^2$ 2018)}, 2018.

\bibitem{ChenECRTS2016-suspension}
J.-J. Chen, G.~Nelissen, and W.-H. Huang.
\newblock A unifying response time analysis framework for dynamic
  self-suspending tasks.
\newblock In {\em Euromicro Conference on Real-Time Systems (ECRTS)}, pages
  61--71, 2016.

\bibitem{Chen2018-suspension-review}
J.-J. Chen, G.~Nelissen, W.-H. Huang, M.~Yang, B.~Brandenburg, K.~Bletsas,
  C.~Liu, P.~Richard, F.~Ridouard, N.~Audsley, R.~Rajkumar, D.~de~Niz, and
  G.~von~der Br{\"u}ggen.
\newblock Many suspensions, many problems: a review of self-suspending tasks in
  real-time systems.
\newblock {\em Real-Time Systems}, Sep 2018.

\bibitem{DBLP:conf/rtss/Dong017}
Z.~Dong and C.~Liu.
\newblock Analysis techniques for supporting hard real-time sporadic gang task
  systems.
\newblock In {\em {IEEE} Real-Time Systems Symposium, {RTSS}}, pages 128--138,
  2017.

\bibitem{DBLP:conf/rtas/GiroudotM18}
F.~Giroudot and A.~Mifdaoui.
\newblock Buffer-aware worst-case timing analysis of wormhole nocs using
  network calculus.
\newblock In {\em Real-Time and Embedded Technology and Applications Symposium,
  (RTAS)}, pages 37--48, 2018.

\bibitem{DBLP:journals/lites/GoossensR16}
J.~Goossens and P.~Richard.
\newblock Optimal scheduling of periodic gang tasks.
\newblock {\em {LITES}}, 3(1):04:1--04:18, 2016.

\bibitem{Goosens2005}
K.~{Goossens}, J.~{Dielissen}, and A.~{Radulescu}.
\newblock Aethereal network on chip: concepts, architectures, and
  implementations.
\newblock {\em IEEE Design Test of Computers}, 22(5):414--421, Sep. 2005.

\bibitem{DBLP:conf/rtns/HardeFBC18}
T.~Harde, M.~Freier, G.~von~der Br{\"{u}}ggen, and J.-J. Chen.
\newblock Configurations and optimizations of {TDMA} schedules for periodic
  packet communication on networks on chip.
\newblock In {\em International Conference on Real-Time Networks and Systems,
  {RTNS}}, pages 202--212, 2018.

\bibitem{627905}
S.~L. Hary and F.~Ozguner.
\newblock Feasibility test for real-time communication using wormhole routing.
\newblock {\em IEE Proceedings - Computers and Digital Techniques},
  144(5):273--278, 1997.

\bibitem{HoogeveenVelde:94:Complexity-of-scheduling}
J.~Hoogeveen, S.~van~de Velde, and B.~Veltman.
\newblock Complexity of scheduling multiprocessor tasks with prespecified
  processor allocations.
\newblock {\em Discrete Appl. Math.}, 55(3):259--272, 1994.

\bibitem{DBLP:journals/corr/IndrusiakBN16}
L.~S. Indrusiak, A.~Burns, and B.~Nikolic.
\newblock Analysis of buffering effects on hard real-time priority-preemptive
  wormhole networks.
\newblock {\em CoRR}, abs/1606.02942, 2016.

\bibitem{Indrusiak:DATE:2018}
L.~S. Indrusiak, A.~Burns, and B.~Nikolic.
\newblock Buffer-aware bounds to multi-point progressive blocking in
  priority-preemptive nocs.
\newblock In {\em Design, Automation {\&} Test in Europe Conference {\&}
  Exhibition, {DATE}}, pages 219--224, 2018.

\bibitem{joseph86responsetimes}
M.~Joseph and P.~Pandya.
\newblock {Finding Response Times in a Real-Time System}.
\newblock {\em The Computer Journal}, 29(5):390--395, May 1986.

\bibitem{DBLP:journals/tvlsi/KasapakiSSMGS16}
E.~Kasapaki, M.~Schoeberl, R.~B. Sorensen, C.~T. Muller, K.~Goossens, and
  J.~Spars{\o}.
\newblock Argo: {A} real-time network-on-chip architecture with an efficient
  {GALS} implementation.
\newblock {\em {IEEE} Trans. {VLSI} Syst.}, 24(2):479--492, 2016.

\bibitem{Kashif:2014}
H.~Kashif, S.~Gholamian, and H.~Patel.
\newblock Sla: A stage-level latency analysis for real-time communication in a
  pipelined resource model.
\newblock {\em IEEE Transactions on Computers}, PP, April 2014.

\bibitem{Kashif:2016}
H.~Kashif and H.~Patel.
\newblock Buffer space allocation for real-time priority-aware networks.
\newblock In {\em 2016 IEEE Real-Time and Embedded Technology and Applications
  Symposium (RTAS)}, pages 1--12, April 2016.

\bibitem{DBLP:conf/rtss/KatoI09}
S.~Kato and Y.~Ishikawa.
\newblock Gang {EDF} scheduling of parallel task systems.
\newblock In {\em {IEEE} Real-Time Systems Symposium, {RTSS}}, pages 459--468,
  2009.

\bibitem{Kavaldjiev}
N.~Kavaldjiev and G.~Smit.
\newblock A survey of efficient on-chip communications for soc.
\newblock In {\em 4th PROGRESS Symposium on Embedded Systems}, pages 129--140.
  STW Technology Foundation, 10 2003.
\newblock Imported from DIES.

\bibitem{708526}
B.~Kim, J.~Kim, S.~Hong, and S.~Lee.
\newblock A real-time communication method for wormhole switching networks.
\newblock In {\em International Conference on Parallel Processing}, pages
  527--534, 1998.

\bibitem{Kubale:87:The-complexity-of-scheduling}
M.~Kubale.
\newblock The complexity of scheduling independent two-processor tasks on
  dedicated processors.
\newblock {\em Inf. Process. Lett.}, 24(3):141--147, 1987.

\bibitem{lehoczky89}
J.~P. Lehoczky, L.~Sha, and Y.~Ding.
\newblock The rate monotonic scheduling algorithm: Exact characterization and
  average case behavior.
\newblock In {\em IEEE Real-Time Systems Symposium'89}, pages 166--171, 1989.

\bibitem{liu73scheduling}
C.-L. Liu and J.~W. Layland.
\newblock Scheduling algorithms for multiprogramming in a hard-real-time
  environment.
\newblock {\em Journal of the ACM}, 20(1):46--61, 1973.

\bibitem{1466499}
Z.~Lu, A.~Jantsch, and I.~Sander.
\newblock Feasibility analysis of messages for on-chip networks using wormhole
  routing.
\newblock In {\em Asia and South Pacific Design Automation Conference
  (ASP-DAC)}, volume~2, pages 960--964, 2005.

\bibitem{DBLP:conf/date/MillbergNTJ04}
M.~Millberg, E.~Nilsson, R.~Thid, and A.~Jantsch.
\newblock Guaranteed bandwidth using looped containers in temporally disjoint
  networks within the nostrum network on chip.
\newblock In {\em Design, Automation and Test in Europe Conference (DATE)},
  pages 890--895, 2004.

\bibitem{365629}
M.~W. Mutka.
\newblock Using rate monotonic scheduling technology for real-time
  communications in a wormhole network.
\newblock In {\em Second Workshop on Parallel and Distributed Real-Time
  Systems}, pages 194--199, 1994.

\bibitem{Nikolic2019}
B.~Nikoli{\'{c}}, S.~Tobuschat, L.~S. Indrusiak, R.~Ernst, and A.~Burns.
\newblock Real-time analysis of priority-preemptive nocs with arbitrary buffer
  sizes and router delays.
\newblock {\em Real-Time Systems}, 55(1):63--105, Jan 2019.

\bibitem{Paukovits2008ConceptsOS}
C.~Paukovits and H.~Kopetz.
\newblock Concepts of switching in the time-triggered network-on-chip.
\newblock {\em 2008 14th IEEE International Conference on Embedded and
  Real-Time Computing Systems and Applications}, pages 120--129, 2008.

\bibitem{NC-Wormhole-WRR-2009}
Y.~Qian, Z.~Lu, and W.~Dou.
\newblock Analysis of worst-case delay bounds for best-effort communication in
  wormhole networks on chip.
\newblock In {\em International Symposium on Networks-on-Chip}, pages 44--53,
  2009.

\bibitem{DBLP:conf/date/RamboE15}
E.~A. Rambo and R.~Ernst.
\newblock Worst-case communication time analysis of networks-on-chip with
  shared virtual channels.
\newblock In {\em Design, Automation {\&} Test in Europe Conference, (DATE)},
  pages 537--542, 2015.

\bibitem{DBLP:journals/corr/RichardGK17}
P.~Richard, J.~Goossens, and S.~Kato.
\newblock Comments on "gang {EDF} schedulability analysis".
\newblock {\em CoRR}, http://arxiv.org/abs/1705.05798, 2017.

\bibitem{5755042}
J.~B. {Schmitt}, F.~A. {Zdarsky}, and I.~{Martinovic}.
\newblock Improving performance bounds in feed-forward networks by paying
  multiplexing only once.
\newblock In {\em 14th GI/ITG Conference - Measurement, Modelling and
  Evalutation of Computer and Communication Systems}, pages 1--15, 2008.

\bibitem{DBLP:conf/fpl/Schoeberl07}
M.~Schoeberl.
\newblock A time-triggered network-on-chip.
\newblock In {\em {FPL} 2007, International Conference on Field Programmable
  Logic and Applications, Amsterdam, The Netherlands, 27-29 August 2007}, pages
  377--382, 2007.

\bibitem{DBLP:journals/jsa/SchoeberlAAACGG15}
M.~Schoeberl, S.~Abbaspour, B.~Akesson, N.~C. Audsley, R.~Capasso, J.~Garside,
  K.~Goossens, S.~Goossens, S.~Hansen, R.~Heckmann, S.~Hepp, B.~Huber,
  A.~Jordan, E.~Kasapaki, J.~Knoop, Y.~Li, D.~Prokesch, W.~Puffitsch, P.~P.
  Puschner, A.~Rocha, C.~Silva, J.~Spars{\o}, and A.~Tocchi.
\newblock {T-CREST:} time-predictable multi-core architecture for embedded
  systems.
\newblock {\em Journal of Systems Architecture - Embedded Systems Design},
  61(9):449--471, 2015.

\bibitem{Shi:RCA:2008}
Z.~Shi and A.~Burns.
\newblock Real-time communication analysis for on-chip networks with wormhole
  switching.
\newblock In {\em Proceedings of the Second ACM/IEEE International Symposium on
  Networks-on-Chip (NOCS)}, pages 161--170, 2008.

\bibitem{Stefan2012}
R.~Stefan, A.~Molnos, A.~Ambrose, and K.~Goossens.
\newblock A tdm noc supporting qos, multicast, and fast connection set-up.
\newblock In {\em Proceedings of the Conference on Design, Automation and Test
  in Europe}, DATE '12, pages 1283--1288, 2012.

\bibitem{DBLP:conf/date/TobuschatE17}
S.~Tobuschat and R.~Ernst.
\newblock Real-time communication analysis for networks-on-chip with
  backpressure.
\newblock In {\em Design, Automation {\&} Test in Europe Conference {\&}
  Exhibition, (DATE)}, pages 590--595, 2017.

\bibitem{DBLP:journals/tcad/WilhelmGRSPF09}
R.~Wilhelm, D.~Grund, J.~Reineke, M.~Schlickling, M.~Pister, and C.~Ferdinand.
\newblock Memory hierarchies, pipelines, and buses for future architectures in
  time-critical embedded systems.
\newblock {\em {IEEE} Trans. on {CAD} of Integrated Circuits and Systems},
  28(7):966--978, 2009.

\bibitem{XIONG:2016}
Q.~Xiong, Z.~Lu, F.~Wu, and C.~Xie.
\newblock Real-time analysis for wormhole noc: Revisited and revised.
\newblock In {\em 2016 International Great Lakes Symposium on VLSI (GLSVLSI)},
  pages 75--80, May 2016.

\bibitem{Xiong:2017}
Q.~Xiong, F.~Wu, Z.~Lu, and C.~Xie.
\newblock Extending real-time analysis for wormhole nocs.
\newblock {\em IEEE Transactions on Computers}, 66(9):1532--1546, Sept 2017.

\end{thebibliography}

\end{document}